\newtheorem{theorem}{Theorem}
\newtheorem{observation}{Observation}
\newtheorem{proposition}{Proposition}
\title{No Efficient Disjunction or Conjunction of Switch-Lists}
\author{Stefan Mengel}
\affil{Univ.~Artois, CNRS, Centre de Recherche en Informatique de Lens (CRIL)}
\begin{document}

\maketitle

\begin{abstract}
It is shown that disjunction of two switch-lists can blow up the representation size exponentially. Since switch-lists can be negated without any increase in size, this shows that conjunction of switch-lists also leads to an exponential blow-up in general.
\end{abstract}

\section{Introduction}

Switch-lists are a representation language for Boolean functions introduced in~\cite{CepekH17}, strongly related to interval representations~\cite{SchieberGZ05}.
The idea is to write the values of a Boolean function~$f$ on all lexicographically ordered inputs in a value
table. Then, to encode $f$, it suffices to remember the value of $f$ on the first input and the inputs at which the
value   of   $f$   changes   from   that   of   its   predecessor.   The   resulting   data   structure   is   called   a   \emph{switch-list}
representation of $f$. Clearly switch list representations can
be far more succinct than truth tables, e.g. for constant functions. 

To systematically understand the properties of switch-lists beyond this, Chrom\'{y} and \v{C}epek~\cite{CepekC20} analyzed them in the context of the so-called \emph{knowledge compilation map}. This framework, introduced in the ground-breaking work of Darwiche and Marquis~\cite{DarwicheM02} gives a list of standard properties which should be analyzed for languages used in the area of knowledge compilation along different axes: succinctness, queries and transformations. The idea of the knowledge compilation map has had a huge influence and the approach of~\cite{DarwicheM02} is widely applied in knowledge compilation, see e.g.~\cite{PipatsrisawatD08,FargierM14,FargierMN13} for a very small sample.

Chrom\'{y} and \v{C}epek~\cite{CepekC20} analyzed switch-lists along the properties of the knowledge compilation map and got a nearly complete picture. It turns out that switch-lists, while being generally much more succinct than truth tables, have many of their good properties. In particular, all of the queries in~\cite{DarwicheM02} (e.g.~consistency, entailement and counting) can be answered in polynomial time on switch-lists and nearly all of the  transformation can be performed efficiently. The only exception is that~\cite{CepekC20} leaves open if switch-lists are closed under bounded disjunction and bounded conjunction, i.e., given two Boolean functions $f_1$ and $f_2$ represented by switch-lists, can one compute a switch-list representation of $f_1\lor f_2$, resp.~$f_1\land f_2$, in polynomial time. It is shown here that this is not the case: there are Boolean functions $f_1$, $f_2$ such that any switch list representation of $f_1\lor f_2$ is exponentially larger than those of $f_1$ and $f_2$. This completes the analysis of switch-lists along the criteria of the knowledge compilation map and shows that (bounded) disjunction and conjunction are the only ``bad'' transformations of switch-lists, as there is no hope for a polynomial-time procedure in this case.

\section{Preliminaries}

Let $f$ be a Boolean function in the $n$ variables $\{x_1, \ldots, x_n\}$. Fix an order $\pi$ of $\{1, \ldots, n\}$. Then, the assignment $a:\{x_1,\ldots, x_n\}\rightarrow \{0,1\}$ can be identified with the number $b(a) \in \{0, \ldots , 2^n-1\}$ by identifying $a$ with $b(a) := \sum_{i=1}^n a(x_{\pi(i)}) 2^{i-1}$. This allows to write $a \prec a'$ if and only if $b(a)< b(a')$. The intuition behind all this is that the assignments are written in lexicographical order with respect to~$\pi$ and then $a \prec a'$ if and only if $a$ appears before $a'$.

A \emph{switch} of the function $f$ with respect to $\pi$ is a number $b\in \{1, \ldots , 2^n-1\}$ such that $f(b) \ne f(b-1)$ (note that here the identification of numbers and assignments to $\{x_1, \ldots, x_n\}$ depending on the order $\pi$ is used).
The \emph{switch-list} representation of $f$ with respect to $\pi$ consist of the value $f(0)$ and an ordered list of all switches of $f$ with respect to $\pi$.
Note that, for fixed $\pi$ the switch-list representation uniquely determines~$f$ and $f$ uniquely determines the switch-list representation. 

The \emph{size} of a switch-list representation is defined as $n$ times the number of switches which corresponds roughly to the natural encoding size\footnote{We do not take into account the size of an encoding of $\pi$ in this since it is the same for all switchlists in~$n$ variables and thus would only complicate the notion without giving any insights.}. Note that the size depends strongly on the order $\pi$.

Following Darwiche and Marquis~\cite{DarwicheM02}, switch-lists are said to satisfy bounded disjunction (resp.~bounded conjunction) if there is a polynomial-time algorithm that, given two switch-list representations of functions $f_1, f_2$, computes a switch-list representation of $f_1\lor f_2$ (resp.~$f_1\land f_2$). Chrom\'{y} and \v{C}epek~\cite{CepekC20} also considered the restricted version of bounded disjunction (resp.~conjunction) in which one assumes that the involved functions $f_1, f_2$ depend on the same set of variables.

\section{The Proof}

Let $n\in \mathbb{N}$ be even. Consider the functions $f_1(x_1, \ldots, x_n) := \Big(\bigwedge_{i=1}^{n/2} x_i\Big)\lor \Big(\bigwedge_{i=1}^{n} \neg x_i\Big)$ and $f_2(x_1, \ldots, x_n) := \Big(\bigwedge_{i=n/2+1}^{n} x_i\Big)\lor \Big(\bigwedge_{i=1}^{n} \neg x_i\Big)$.

\begin{observation}\label{obs}
There are switch-list representations for $f_1$ and $f_2$ with at most two switches.
\end{observation}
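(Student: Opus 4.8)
The plan is to exhibit, for each $f_j$, a single well-chosen variable order under which the set of satisfying assignments becomes essentially an interval, so that only two switches occur. The key point is that the ``natural'' order $\pi = (1, 2, \ldots, n)$ is a poor choice for $f_1$: under it the assignments satisfying $\bigwedge_{i=1}^{n/2} x_i$ are exactly those whose $n/2$ lowest bits are all $1$, and these form the scattered set $\{2^{n/2}-1,\ 2\cdot 2^{n/2}-1,\ \ldots\}$, which would force $\Theta(2^{n/2})$ switches. The fix is to make the variables occurring in the large conjunction the \emph{most significant} ones, so that this conjunction becomes a single contiguous block of assignments.

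Concretely, for $f_1$ I would take the order $\pi_1$ that places $x_1, \ldots, x_{n/2}$ in the $n/2$ most significant positions and $x_{n/2+1}, \ldots, x_n$ in the $n/2$ least significant positions (for instance $\pi_1 = (n/2+1, \ldots, n, 1, \ldots, n/2)$). Under $\pi_1$, an assignment $a$ satisfies $\bigwedge_{i=1}^{n/2} x_i$ iff the top $n/2$ bits of $b(a)$ are all $1$, i.e. iff $b(a) \in \{2^n - 2^{n/2}, \ldots, 2^n - 1\}$, and it satisfies $\bigwedge_{i=1}^{n} \neg x_i$ iff $b(a) = 0$. Hence $f_1$ takes value $1$ exactly on $\{0\} \cup \{2^n - 2^{n/2}, \ldots, 2^n - 1\}$ and value $0$ on $\{1, \ldots, 2^n - 2^{n/2} - 1\}$ (a range that is non-empty since $2^n - 2^{n/2} \ge 2$). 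Reading off the value table, $f_1(0) = 1$, there is a switch at $1$, the value stays $0$ up to $2^n - 2^{n/2} - 1$, there is a switch at $2^n - 2^{n/2}$, and the value stays $1$ afterwards; so the switch-list of $f_1$ with respect to $\pi_1$ has exactly two switches.

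For $f_2$ I would do the symmetric thing, taking the order $\pi_2$ that places $x_{n/2+1}, \ldots, x_n$ in the most significant positions (here the identity order already works). The same computation shows that, with respect to $\pi_2$, $f_2$ has value $1$ exactly on $\{0\} \cup \{2^n - 2^{n/2}, \ldots, 2^n - 1\}$, hence again two switches. Since everything reduces to locating the satisfying set of each function under one explicit order and then counting value changes, I do not expect a real obstacle here; the only thing to get right is the choice of order, and in particular to notice that the two functions require \emph{different} orders — which is of course consistent with, and ultimately the source of, the hardness result proved afterwards.
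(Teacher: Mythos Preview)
Your proof is correct and follows essentially the same approach as the paper: choose an order that puts the variables of the relevant big conjunction into the most significant positions, so that the models form the singleton $\{0\}$ together with a final interval, yielding exactly two switches. Your write-up is in fact more explicit than the paper's (which only sketches the argument for $f_1$ and leaves $f_2$ as analogous), and your closing remark that $f_1$ and $f_2$ require \emph{different} orders is a nice observation that the paper does not spell out.
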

\begin{proof}
 Only give the argument for $f_1$ is given as that for $f_2$ is completely analogous. Fix any order $\pi$ in which the variables $x_1, \ldots, x_{n/2}$ come before those in $x_{n/2+1}, \ldots, x_n$. An assignment is a model of $f_1$ if and only if it maps all variables to $0$ or it maps $x_1, \ldots, x_{n/2}$ to $1$. So all models different from $0$ lie in the interval $[\sum_{j= n/2+1}^{n} 2^{j-1}, \sum_{j= 1}^{n} 2^{j-1}]$. Note that this interval lies at the end of the order of all assignments. So for these models, $f_1$ only has one switch at the beginning of the interval. To represent $f_1$ with a switch-list one only needs one additional switch directly after $0$.
\end{proof}

\begin{proposition}\label{prop}
 The function $f_1\lor f_2$ needs at least $2^{n/2+1}-3$ switches in any switch-list representation.
\end{proposition}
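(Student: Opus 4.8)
The plan is to prove the statement directly: for \emph{every} variable order $\pi$ I exhibit at least $2^{n/2+1}-3$ switches of $g:=f_1\lor f_2$. Identify each variable $x_j$ with its \emph{level} $\ell(x_j)\in\{0,\dots,n-1\}$, the exponent with which setting $x_j$ to $1$ contributes $2^{\ell(x_j)}$ to the index $b(a)$ of an assignment~$a$, and put $S_A:=\{\ell(x_1),\dots,\ell(x_{n/2})\}$ and $S_B:=\{\ell(x_{n/2+1}),\dots,\ell(x_n)\}$, so that $\{0,\dots,n-1\}=S_A\sqcup S_B$. Because $g$ is invariant under the permutation of variables swapping $x_i$ with $x_{i+n/2}$ for all $i\le n/2$ (this swap exchanges $f_1$ and $f_2$, hence fixes $g$; relabelling $\pi$ accordingly exchanges $S_A$ with $S_B$ and leaves the value table of $g$, hence its number of switches, unchanged), I may assume $0\in S_B$. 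Finally recall that $g$ holds exactly on $A\cup B\cup\{\mathbf 0\}$, where $A$ (resp.~$B$) is the set of assignments setting all of $x_1,\dots,x_{n/2}$ (resp.~all of $x_{n/2+1},\dots,x_n$) to~$1$ and $\mathbf 0$ is the all-zero assignment, with $|A|=|B|=2^{n/2}$, $A\cap B=\{\mathbf 1\}$ (where $\mathbf 1$ is the all-one assignment) and $\mathbf 0\notin A\cup B$.

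I will produce $2^{n/2}-1$ ``left'' switches, located at indices $b$ with $g(b)=1$, and $2^{n/2}-2$ ``right'' switches, located at indices $b$ with $g(b)=0$; since these two sets of indices are disjoint, together they are $2^{n/2+1}-3$ distinct switches. For the left switches: since $0\in S_B$, every $a\in B$ sets the level-$0$ variable to $1$ and so has \emph{odd} index, whence $b(a)-1$ is even and, no borrow occurring, agrees with $b(a)$ on all bits other than the level-$0$ one, which lies in $S_B$; thus $b(a)-1$ indexes a member of $A$ if and only if $a\in A$. As moreover $b(a)-1\ge 2$ (an element of $B$ sets $|S_B|=n/2\ge 2$ variables, using $n\ge 4$), it follows that for each of the $2^{n/2}-1$ assignments $a\in B\setminus A$ the index $b(a)-1$ is not a model while $b(a)$ is, so $b(a)$ is a switch.

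For the right switches, the crux is the claim: \emph{for $a\in B$ with $a\ne\mathbf 1$, the index $b(a)+1$ is a model only if $b(a)=(2^n-1)-2^{\min S_A}$.} Granting it, for each of the at least $2^{n/2}-2$ assignments $a\in B$ distinct from $\mathbf 1$ and from that single exception, $b(a)+1\le 2^n-1$ is not a model although $b(a)$ is, so $b(a)+1$ is a switch. To prove the claim, write $b:=b(a)$ and let $t\ge 1$ be the length of the trailing run of $1$-bits of $b$ (well defined because $b$ is odd and $b\ne 2^n-1$), so that $b+1$ arises from $b$ by turning bits $0,\dots,t-1$ into $0$ and bit $t$ into $1$. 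If $b+1$ is a model, then, being even and nonzero, it must index a member of $A$, so all its $S_A$-bits equal $1$; since bits $0,\dots,t-1$ of $b+1$ are $0$ this forces $S_A\cap\{0,\dots,t-1\}=\emptyset$, and since bit $t$ of $b$ is $0$ while $a\in B$ makes every $S_B$-bit of $b$ equal $1$ we get $t\notin S_B$, hence $t\in S_A$, so $\min S_A=t$. Finally every bit of $b$ above position $t$ equals $1$ (it is unchanged in $b+1$, hence $1$ if it lies in $S_A$; and $1$ if it lies in $S_B$ since $a\in B$), while bits $0,\dots,t-1$ of $b$ are $1$, so $b=(2^n-1)-2^t=(2^n-1)-2^{\min S_A}$, as claimed.

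The only genuinely delicate step is this last claim -- essentially, tracking the carry in $b\mapsto b+1$ against the partition $\{0,\dots,n-1\}=S_A\sqcup S_B$ -- and even that, as sketched, is routine once the levels are named; everything else is elementary counting. (The argument does use $n\ge 4$, so that each index occurring in $B$ is at least $3$; this is in any case unavoidable, since $f_1\lor f_2\equiv 1$ when $n=2$.)
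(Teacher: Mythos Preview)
Your proof is correct. Both you and the paper first reduce by symmetry to the case where the level-$0$ variable lies in the second block, and then exploit that the $2^{n/2}-1$ elements of $B\setminus\{\mathbf 1\}$ are each sandwiched between non-models. Your ``left'' switches coincide exactly with the paper's: the paper forms, for each assignment $a$ to $X_1$ other than all-$1$, the pair $e_0(a),e_1(a)$ obtained by setting $X_2\setminus\{x_n\}$ to~$1$ and $x_n$ to $0$ or $1$; these are precisely an element of $B\setminus A$ and its level-$0$ flip, hence index-adjacent.

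The execution diverges for the ``right'' switches. You pin them to the explicit index $b(a)+1$, which forces the carry analysis (your ``only genuinely delicate step''). The paper avoids this entirely: it simply lists all the $e_0(a_i),e_1(a_i)$ in lexicographic order, observes that this is an increasing subsequence of the value table along which $g$ strictly alternates $0,1,0,1,\dots$, and concludes that between any two consecutive entries of this subsequence there must be at least one switch---yielding $2(2^{n/2}-1)-1$ switches with no arithmetic beyond ordering. The price is that the paper does not name the 1-to-0 switch positions exactly; the gain is that the whole argument fits in a sentence. Your explicit-index approach, by contrast, actually identifies where the switches sit, which is a bit more information.
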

\begin{proof} 
 Let $X_1 := \{x_1, \ldots, x_{n/2}\}$ and $X_2:=\{x_{n/2+1} , \ldots, x_n\}$. Fix any variable order $\pi$ of $X_1\cup X_2$ and let $\preceq$ denote the lexicographical order with respect to~$\pi$. The last variable of $\pi$ is either in $X_1$ or in $X_2$. Without loss of generality,~assume that it is in $X_2$ and that the last variable in $\pi$ is $x_n$. 
 
 For every assignment $a$ to $X_1$, two extensions $e_0(a)$ and $e_1(a)$ to $X_1\cup X_2$ are constructed as follows: on $X_1$, the assignments $e_0(a)$ and $e_1(a)$ are both identical to $a$; all variables in $X_2 \setminus \{x_n\}$ are assigned $1$ and $x_n$ is assigned $0$ in $e_0(a)$ and~$1$ in $e_1(a)$. Let $\pi_1$ be the order $\pi$ restricted to $X_1$ and let $\preceq_1$ be the order of the assignments to $X_1$ with respect to $\pi_1$. Then for two assignments $e_i(a_1)$ and $e_j(a_2)$ it holds that $e_i(a_1) \prec e_j(a_2)$ if and only if $a_1 \prec_1 a_2$; or $a_1 =a_2$ and $i < j$. Note that none of the assignments of the form $e_i(a)$ is the constant $0$-assignment, so $e_i(a)$ satisfies $f_1\lor f_2$ if and only if it satisfies $\left(\bigwedge_{i=1}^{n/2} x_i\right)\lor \left(\bigwedge_{i=n/2+1}^{n} x_i\right)$.
 
 Now let $a_1, \ldots, a_{2^{n/2}-1}$ be the assignments to $X_1$ different from constant $1$-assignment given in the order $\preceq_1$. Then the resulting sequence 
 \begin{align}\label{sequence}e_0(a_1), e_1(a_1),\ldots, e_0(a_{2^{n/2}-1}), e_1(a_{2^{n/2}-1})\end{align} is in lexicographical order as well. Note that because none of the $a_i$ is the constant $1$-assignment, it holds that for every $i\in [2^{n/2}-1]$ that $e_1(a_i)$ is a model of $f_1\lor f_2$ while $e_0(a_i)$ is not. Thus there must be switches between each pair of consecutive elements of the sequence (\ref{sequence}). So there must be at least $2\times (2^{n/2}-1)-1 = 2^{n/2+1} -3$ switches in the switch-list representation of $f_1\lor f_2$ with respect to the order~$\pi$.
\end{proof}

The main result of this paper follows directly.
\begin{theorem}
 Switch-lists satisfy neither bounded disjunction nor bounded conjunction. This remains true when the functions to be disjoined (resp.~conjoined) are on the same set of variables.
\end{theorem}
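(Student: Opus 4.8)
The plan is to derive the theorem immediately by combining Observation~\ref{obs} and Proposition~\ref{prop}. First note that $f_1$ and $f_2$ are both functions on the variable set $\{x_1, \ldots, x_n\}$, so the claim about functions on a common set of variables already covers the general case and no separate argument is needed.

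For bounded disjunction: by Observation~\ref{obs}, each of $f_1$ and $f_2$ has a switch-list representation with at most two switches, hence of size at most $2n$, so the two input representations together have size $O(n)$. By Proposition~\ref{prop}, every switch-list representation of $f_1\lor f_2$ has at least $2^{n/2+1}-3$ switches and therefore size at least $n(2^{n/2+1}-3)$, which is exponential in $n$. Since any algorithm must take at least as many steps as the length of the output it produces, no polynomial-time algorithm can take switch-list representations of $f_1$ and $f_2$ and return one of $f_1\lor f_2$. Hence switch-lists do not satisfy bounded disjunction, even when restricted to functions on a common variable set.

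For bounded conjunction I would invoke that negation is free for switch-lists: a number $b$ is a switch of $f$ if and only if it is a switch of $\neg f$, since $f(b)\ne f(b-1)$ is equivalent to $\neg f(b)\ne \neg f(b-1)$; concretely, a switch-list representation of $\neg f$ is obtained from one of $f$ by flipping the stored value $f(0)$ and keeping the list of switches verbatim, so the size is unchanged. Thus $\neg f_1$ and $\neg f_2$ have switch-list representations of size at most $2n$. By De Morgan, $\neg f_1\land\neg f_2=\neg(f_1\lor f_2)$, which by the previous paragraph has at least $2^{n/2+1}-3$ switches in any switch-list representation. The same output-size argument then shows that no polynomial-time algorithm can compute a switch-list representation of $\neg f_1\land \neg f_2$ from those of $\neg f_1$ and $\neg f_2$, so switch-lists do not satisfy bounded conjunction either, again even on a common variable set.

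There is no substantial obstacle left once Observation~\ref{obs} and Proposition~\ref{prop} are available; the only points worth stating carefully are that failure of closure under an operation already follows from the output being forced to have exponential size, independently of any complexity-theoretic assumption, and that passing from a switch-list of $f$ to one of $\neg f$ leaves the set of switches, and hence the representation size, exactly unchanged.
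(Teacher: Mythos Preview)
Your proof is correct and follows essentially the same route as the paper: derive the disjunction case directly from Observation~\ref{obs} and Proposition~\ref{prop} via the output-size argument, and reduce the conjunction case to it by negating the inputs (free for switch-lists) and applying De~Morgan. If anything, you are slightly more explicit than the paper in justifying why negation preserves the set of switches.
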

\begin{proof}
For disjunction, this follows directly from Observation~\ref{obs} and Proposition~\ref{prop} since the outcome of any polynomial-time algorithm would in particular be of polynomial size.

For conjunction, let us define $f'_1 = \neg f_1$ and $f'_2 = \neg f_2$. Observe that a switch-list of $f$ can be negated in constant time by simply flipping the value $f(0)$ (keeping the same permutation of variables). Clearly $f'_1 \wedge f'_2 = \neg f_1 \wedge \neg f_2 = \neg (f_1 \vee f_2)$ and the lower bound for $f_1 \vee f_2$ from Proposition~\ref{prop} is of course valid also for $\neg (f_1 \vee f_2)$. This gives us an identical lower bound for the size of any switch list representing $f'_1 \wedge f'_2$.
\end{proof}

\section{Conclusion}

I was shown that switch-lists neither satisfy bounded disjunction nor bounded conjunction. This even remains true if both inputs depend on the same set of variables. This completes the analysis of switch-lists in the framework of the knowledge compilation map.

Let us remark that for practical applicability of switch-lists, this is rather bad news. Many classical approaches to practical knowledge compilation use so-called bottom-up compilation: given a conjunction of clauses, or more generally constraints, $F=\bigwedge_{i=1}^m C_i$, one first computes representations $R(C_i)$ of individual constraints $C_i$. Then one uses efficient conjunction to iteratively conjoin the $R(C_i)$  to get a representation of $F$. Since conjunction of even two switch-lists is hard in general, this approach is ruled out by our results.

To better understand when switch-lists are useful, it would be interesting to find classes of functions for which small switch-list representations can be computed efficiently, either theoretically or with heuristic approaches.

\paragraph*{Acknowledgement.} The author would like to thank Ondřej Čepek for helpful comments on an earlier version of this paper.
\bibliographystyle{plain}
\bibliography{switchlist}

\end{document}